\documentclass[12pt]{article}
\usepackage[mathscr]{eucal}
\usepackage{amsfonts}
\usepackage{amssymb}
\usepackage{amsthm}
\theoremstyle{plain}
\newtheorem{thm}{Theorem}
\usepackage{epsf}
\addtolength{\topmargin}{-2cm}
\addtolength{\textheight}{3.5cm}
\addtolength{\oddsidemargin}{-1cm}
\addtolength{\textwidth}{1.5cm}
\addtolength{\footskip}{0.7cm}
\newcommand{\be}{\begin{eqnarray}}
\newcommand{\ee}{\end{eqnarray}}
\newcommand{\ba}{\begin{array}}
\newcommand{\ea}{\end{array}}
\newcommand{\p}[1]{(\ref{#1})}
\newcommand{\lb}[1]{\label{#1}}

\def\bbox{{\,\lower0.9pt\vbox{\hrule \hbox{\vrule height 0.2 cm
\hskip 0.2 cm \vrule height 0.2 cm}\hrule}\,}}
\newcommand{\dsl}{\pa \kern-0.5em /}

\newcommand{\nn}{\nonumber \\}

\newcommand{\vecg}[1]{\mbox{\boldmath $#1$}}
\newcommand{\vecb}[1]{{\bf #1}}



\textheight=22.5cm \textwidth=16cm \topmargin=-1.0cm \oddsidemargin=0.2cm


\begin{document}


\begin{titlepage}

\vfill
\vfill

\begin{center}
\baselineskip=16pt {\Large  Taming the Zoo of Supersymmetric Quantum Mechanical  Models}\vskip 0.3cm {\large {\sl }}
\vskip 10.mm {\bf 
A.~V. Smilga}
 \\
\vskip 1cm

SUBATECH, Universit\'e de Nantes, \\
4 rue Alfred Kastler, BP 20722, Nantes 44307, France
\footnote{On leave of absence from ITEP, Moscow, Russia}\\
E-mail:  smilga@subatech.in2p3.fr

\end{center}
\vskip 2cm

\par
\begin{center}
{\bf ABSTRACT}
\end{center}
\begin{quote}

We show that in many cases nontrivial and complicated supersymmetric quantum mechanical (SQM) 
 models can be obtained from the simple model describing
free dynamics in flat complex space by two operations: {\it (i)} Hamiltonian reduction 
and {\it (ii)} similarity transformation
of the complex supercharges. We conjecture that it is true for {\it any} SQM model.

\end{quote}
\end{titlepage}
\setcounter{equation}{0}
\section{Introduction}

It is a common belief now that whatever the Grand Unified Theory is, it is a version of supersymmetric
field theory. Incorporating supersymmetry is the only known natural way to resolve the hierarchy problem.

An excellent playground to study supersymmetric dynamics is provided by SQM models. The simplest nontrivial such model 
introduced in \cite{Wit81} has the supercharges
 \be
\lb{QWit}
Q = \psi [p + i W'(x) ] \, , \ \ \ \ \ \ \ \ \ \ \  \bar Q = \bar\psi [p - i W'(x) ]
 \ee
and the Hamiltonian
 \be
\lb{HWit}
 H = \frac  12 \left[ {p^2} +  (W')^2 +   W''(x) (\bar \psi  \psi - \psi\bar\psi)  \right] \, .
 \ee
 In classical theory, $p,x$ are the usual conjugated phase space variables and $\bar \psi, \psi$ are canonically conjugated 
Grassmann variables.  In quantum theory (which we will be mainly concerned with), $x$ and $\psi$ are still usual real 
and complex Grassmann numbers,
while  $p$ and $\bar \psi$   become differential operators, $p = -i\partial/\partial x$,
$\bar\psi = \partial/\partial \psi$.

 $W(x)$ is an arbitrary function. The operators \p{QWit}, \p{HWit} satisfy the  simplest supersymmetry algebra,
  \be
\lb{algsusy}
Q^2 = \bar Q^2 = 0\, , \ \ \ \ \ \ \ \{\bar Q, Q \} = 2H\, .
  \ee

Since \cite{Wit81}, a lot of other models have been constructed. Many such models have a rather complicated form.
Some of them have extended supersymmetries --- several pairs of complex supercharges $Q_a, \bar Q_a$ satisfying the algebra
\footnote{Following the commonly adopted nowadays convention, ${\cal N}$ denotes the total number of {\it real} conserved 
supercharges. For the models with physical supersymmetry of the spectrum that involve (at least) double degeneracy of 
all excited states, 
${\cal N}$ is always even.}
 \be
\lb{algN}
\{Q_a, Q_b \} =  0\, , \ \ \{Q_a, \bar Q_b \} = 2 \delta_{ab} H \, , \ \ \ \ \ \ \ \ \ \ \ \ a,b = 1,\ldots, \frac 
{\cal N} 2 \, .
  \ee
The models involving up to 8 such pairs are known. 

One can write the action as an integral over the usual $(t, \theta, \bar\theta)$ or extended $(t, \theta_j, \bar\theta_j)$ 
superspace of a usual or extended real superfield expressed via fundamental superfields. Such action is manifestly 
invariant under supersymmetry transformations.
The problem is, however, that {\it many} such superfields and {\it many} such invariant actions can be constructed.

In this paper, we suggest an alternative approach. Instead of working in superspace, we restrict ourselves with supercharges
and Hamiltonians expressed in components. Then we observe that, in the all studied cases, an SQM system can be obtained 
from the basic
simple system describing the free flat complex dynamics,
 \be
\lb{freed}
  Q = \sqrt{2} \psi_a \pi_a, \ \  \bar Q = \sqrt{2} \bar \psi_a \bar \pi_a, \ \ \  \ \ \ \ \ H = \bar \pi_a \pi_a \ , \nn
(a = 1,\ldots,d )\, ,
 \ee
where $d$ is the complex dimension.

It is achieved by a combination of two operations: {\it (i)} Hamiltonian reduction  and {\it (ii)} 
similarity transformation of supercharges.

As a warm-up, let us obtain in this way the model \p{QWit}, \p{HWit}. We start from the model \p{freed} with $d=1$. 
The complex momentum $\pi$ has the real and imaginary parts, $\pi = (p_x + i p_y)/\sqrt{2}$. 
The wave functions depend on $x,y$ 
and the Grassmann holomorphic variable $\psi$.
At the first step, we impose the constraint $p_y \Psi = -i \partial \Psi/\partial y = 0$. We are allowed to do it as 
$p_y$ commutes with the Hamiltonian. The reduced Hamiltonian is just $p_x^2/2$. 

The constraint commutes not only with the Hamiltonian, but also with the supercharges. This implies that the 
reduced system enjoys
the same ${\cal N} = 2$ supersymmetry as the parent one. The reduced supercharges are 
$$Q^{\rm free} = p_x \psi, \ \bar Q^{\rm free}  = p_x \bar \psi \, .$$ 
It is just the free ($W=0$) version of Witten's model \p{QWit}, \p{HWit}.

The potential can be introduced at the second step by a similarity transformation. Indeed, the supercharges \p{QWit}
can be expressed via the free ones as 
 \be
\lb{simil}
Q = e^W Q^{\rm free} e^{-W} \, , \ \ \ \ \ \ \ \ \ \bar Q = e^{-W} \bar Q^{\rm free} e^{W} \, .
  \ee
The transformed supercharges $Q, \bar Q$ are nilpotent if $ Q^{\rm free},  \bar Q^{\rm free}$ are nilpotent 
and they hence satisfy  the same
supersymmetry algebra. Note that this is a similarity transformation for the {\it supercharges}. 
The operator $e^W$ is not unitary such that  $Q$ and $\bar Q$ are transformed in a different way. 
As for the Hamiltonian $\{\bar Q, Q\}$, it is not related to the free Hamiltonian
by any similarity transformation and has a distinct spectrum.

This example is trivial, but we will see in the next two sections that this philosophy works in many other not 
so trivial cases and a complicated
SQM model can be obtained from the free model \p{freed} 
by performing a proper Hamiltonian reduction and a proper similarity transformation.

\section{${\cal N} = 2$ sigma models.}

\subsection{Dolbeault complex}
Let us concentrate on $Q$ and perform the following similarity transformation of the free supercharge in \p{freed},
 \be
\lb{similDol}
Q = e^R Q^{\rm free} e^{-R} 
  \ee
where $R$ is not just a function of coordinates as in \p{simil}, but an operator, $R = \omega_{ab} \psi_a \bar \psi_b$.
The supercharge $\bar Q$ will then be rotated with the operator $e^{-R^\dagger}$. 
(It will be convenient for us later to introduce $\bar Q$ rotated with an extra scalar function reflecting 
the presence of a nontrivial Hilbert space measure in the rotated system --- see Eq.\p{QbarDolb} below. 
But let us keep for a moment
$Q$ and $\bar Q$ Hermitially conjugate in the naive sense, without taking into account the measure.)    
When $\omega_{ab}$ is anti-Hermitian, $e^R$ is unitary, $Q,\bar Q$, and $H$ are rotated by the same operator, and 
this boils down to a canonical transformation of the phase space variables. On the other hand, when $\omega_{ab}$
is Hermitian, the supercharges $Q$ and $\bar Q$ are transformed differently, and their anticommutator is nontrivial.

The calculation can be done using the Hadamard formula,
 \be
\lb{Hadamard}
e^R X e^{-R} \ =\ X + [R,X] + \frac 12 [R, [R,X]] + \ldots
 \ee
In our case, this implies
 \be
\lb{Hadappl}
e^R \psi_c e^{-R} &=& \psi_a \left( e^\omega \right)_{ac} \, , \\
e^R \partial_c e^{-R} &=& \partial_c + \left( e^\omega \right)_{ae} \left( \partial_c  e^{-\omega} \right)_{eb} 
\psi_a \bar \psi_b \, .
 \ee
We thus derive  
 \be
\lb{Qomega}
Q \ = \ \sqrt{2} \psi_d \left( e^\omega \right)_{dc} \left[ \pi_c - i \left( e^\omega \right)_{ae} 
\left( \partial_c  e^{-\omega} \right)_{eb} \psi_a \bar \psi_b \right] \, . 
 \ee
The associated Hamiltonian has the kinetic term with a nontrivial Hermitian metric,
 \be
\lb{Hkin}
H^{\rm kin} = \left(e^{\omega^\dagger} e^{\omega} \right)_{ab} \bar \pi_a \pi_b  \to  
\left(e^{\omega^\dagger} e^{\omega} \right)^{\bar k j} \bar \pi_{\bar k} \pi_j \, .
  \ee 
The matrices $e^{\pm \omega}, e^{\pm \omega^\dagger}$ can then be interpreted as the complex {\it vielbeins},
 \be
\lb{vielbomega} 
 \left( e^\omega \right)_{ac} \to e^j_a, \ \ \ \ \ \ \left( e^{-\omega} \right)_{ca} \to  e_j^a , \ \ \ \ \ \ \ 
  \left( e^{\omega\dagger} \right)_{ca} \to e^{\bar j}_{\bar a}, \ \ \ \ \ \ \ 
 \left( e^{-\omega\dagger} \right)_{ac} \to e_{\bar j}^{\bar a}
 \ee
When $\omega$ is Hermitian, the vielbein matrix $e^j_a$ is also Hermitian. For generic $\omega$, the vielbein is a 
generic complex matrix,
with the anti-Hermitian part of $\omega$ corresponding to tangent space rotations.

One can note now that the supercharge \p{Qomega} can be rewritten as
\be
\lb{QDolb}
Q \ =\  \sqrt{2} \psi^j \left( \pi_j  + i\Omega_{j, \bar b a} \psi_a \bar \psi _{ b} \right)
 \ee
with $\Omega_{k, \bar b a}$ being the so called  {\it Bismut spin connection} corresponding to the metric 
$h = e^{\omega\dagger} e^{\omega}$ and the vielbein $e = e^\omega$.
\footnote{The Bismut spin connection  is related 
to the Bismut affine connection\cite{Bismut}, which is a torsionfull affine connection such that {\it (i)} 
the covariant derivatives of the metric and the complex structure
matrix vanish; {\it (ii)} the torsions are completely antisymmetric. 
See \cite{ISDir,RR} for all definitions  and  notations. To avoid a confusion, note also 
here that the Bismut spin connection does {\it not} coincide with the structure $-e_a^l (\partial_j e_l^b)$ entering
\p{Qomega}, but involves extra terms. These terms vanish when multiplying by $\psi^j \psi_a$. 
Cf. Eq.(3.13) of Ref.\cite{ISDir}.} 

A nontrivial metric introduces a natural covariant measure in the Hilbert space,
 \be
\lb{measure}
\mu \ =\ \det h \, \prod_{j} dz^j d\bar z^j \ .
 \ee
It is convenient to define $\bar Q$ to be Hermitially conjugate to $Q$ with respect to this measure,
 \be
\lb{QbarDolb}
\bar Q =   (\det h)^{-1} Q^\dagger \, \det h  \, ,
 \ee
where $Q^\dagger$ is the ``naive'' Hermitian conjugation.

The supercharges thus obtained exactly {\it coincide} with the supercharges (3.26) in Ref.\cite{ISDir}, if setting there 
$W = \ln \det h/4$. These supercharges were obtained from N\"other supercharges of a certain SQM model 
with a nontrivial superspace 
Lagrangian \cite{Hull,ISDir},
 \be
\lb{LDolb}
 L = -\frac 14 h_{j\bar k} D Z^j \bar D \bar Z^{\bar k} + W(Z,\bar Z)\, ,
  \ee
where $Z^j$ and $\bar Z^j$ are chiral $d=1$ superfields and $D, \bar D$ are supersymmetric covariant derivatives. 

As was explained in details  in \cite{ISDir}, the Hilbert space of the functions $\Psi(z^j, \bar z^j, \psi_a)$ with the measure
\p{measure} can be mapped onto the space of holomorphic $p$-forms realizing the Dolbeault complex.
The supercharges \p{QDolb}, \p{QbarDolb} can then be mapped to the exterior holomorphic derivative operator 
$\partial$ and its Hermitian
conjugate.

As was just mentioned, the supercharges \p{QDolb} and \p{QbarDolb} correspond to a particular choice of $W$ in the Lagrangian 
\p{LDolb}. One can, however, obtain the model with any $W$ 
by applying an extra similarity transformation,
  \be
\lb{similW}
Q \to e^G Q e^{-G} 
 \ee
with $G = W - \frac 14 \ln \det h$. 
\footnote{For a compact complex manifold, there is no global expression for the  metric valid everywhere. 
One should introduce {\it charts}. 
The expressions like $\ln \det h$ are in fact not nonsingular {\it functions} on the manifold, but should be understood 
as {\it sections} 
of a certain line (Abelian) fiber bundle. Matching the local expressions for these sections in the regions where the 
charts overlap imposes the restrictions on $W$ associated with the quantization of topological charge. 
If these restrictions are not fulfilled, supersymmetry
is lost \cite{flux}.}  This gives what mathematicians call a {\it twisted  holomorphic Dolbeault complex}

Another  distinguished choice, besides $W =  \frac 14 \ln \det h$ corresponding to
$G =  0$, is $W =  - \frac 14 \ln \det h$  corresponding to
$G = - \frac 12  \ln \det h$ . This amounts to the overall similarity transformation with
$\exp\{ -\omega_{ab} \bar \psi_b \psi_a\}$ and gives 
the  {\it antiholomorphic}  untwisted Dolbeault complex.

In the physical language, this complex describes the dynamics of a Dirac operator in the presence of Abelian 
gauge field, $A_M = \{-i \partial_m W, i\partial_{\bar m} W \}$. The models with non-Abelian gauge fields 
can  be obtained from the model \p{LDolb} with $W=0$ by a similarity transformation \p{similW} with a 
matrix-valued $G$. 

The supercharge $Q$ can be further rotated with a {\it holomorphic} 
\footnote{It is holomorphic with respect to fermion variables, but ${\cal B}_{jk}$ etc. are arbitrary functions
of $z^j$ and $\bar z^j$. }
operator 
  \be
\lb{Rholomorph}
\exp \left\{{\cal B}_{jk} \psi^j \psi^k + {\cal B}_{jklm} \psi^j \psi^k \psi^l \psi^m + \ldots \right \}
 \ee
One obtains in this way complex sigma models with torsions studied in \cite{FIS}.

\subsection{De Rham complex}
Note first that, representing $\pi_a = [p_x^{(a)} + ip_y^{(a)}]/\sqrt{2}$ in \p{freed} and imposing $d$ constraints 
$p_y^{(a)} \Psi = 0$, we obtain the model describing free flat {\it real} dynamics. 
The supercharges are
 \be
\lb{Qreal}
 Q = p_A \psi_A, \ \ \ \ \ \ \ \ \ \ \ \ \, \ \bar Q =  p_A \bar\psi_A \, .
  \ee 
(with $p_A \equiv p_x^{(a)}$, $A = 1,\ldots,d$).

Let us apply now a similarity transformation \p{similDol} with
\be
\lb{RAB}
 R = \omega_{AB} \psi_A \bar \psi_B \, .
 \ee
When $\omega_{AB}$ is anti-Hermitian, this amounts to a unitary rotation. New nontrivial models are obtained
for Hermitian  $\omega_{AB}$.

Let first  $\omega_{AB}$ be {\it real} and symmetric. By the same token as in the complex case, one obtains
\be
\lb{QRham}
Q \ = \  \psi_D \left( e^\omega \right)_{DC} \left[ p_C - i \left( e^\omega \partial_C  e^{-\omega}  \right)_{AB} 
 \psi_A \bar \psi_B \right] \equiv \  
 \psi^M \left( p_M  - i\Omega_{M, AB} \psi_A \bar \psi_B \right) \, ,
 \ee
where $\psi^M = e^M_A \psi_A$ with the real vielbeins 
 \be
\lb{vielreal}
 e^M_A = \ \left( e^\omega \right)_A^{\ M},\ \ \ \ \ \ \ \ \ \  e_{MA} = \ \left( e^{-\omega} \right)_{MA}
  \ee
giving the metric 
 \be
 \lb{metr}
g_{MN} =  \left( e^{-2\omega}\right)_{MN} \, .
  \ee
Profiting from anticommutativity of $\psi^M$ and $\psi_A$, we have expressed the 3-fermion structure in the supercharge
via  the spin connection 
 \be
\lb{Om}
 \Omega_{M, AB} = e_{AN} \left( \partial_M e^N_B + \Gamma^N_{MK} e^K_B \right)\, . 
 \ee

The supercharge \p{QRham} is well known. It can be mapped to the  exterior 
derivative operator $d$ of the de Rham complex \cite{Witgeom}. The supercharge $\bar Q$ is convenient
to define as $\bar Q =\ ( \det g)^{-1/2} Q^\dagger  ( \det g)^{1/2}$ such that it is Hermitially conjugate to $Q$ with the
Riemann covariant measure $\sqrt{  \det g}$. It can be presented as
 \be
\lb{QbarRham}
\bar Q \ = \  
 \bar \psi^M \left( p_M  - i\Omega_{M, AB} \bar \psi_A  \psi_B \right) \, .
 \ee

The Lagrangian of this model can be easily 
written in terms of $d$ real superfields 
\be
\label{Xj}
X^M = x^M + \theta \psi^M  +
 \bar \psi^M  \bar \theta + F^M  \theta \bar \theta \, .
 \ee   
It has the form \cite{FT} 
\be
\label{121stand}
L = \frac 12 \int d\theta d \bar\theta  \, g_{MN}(X) DX^M \bar D X^N \,. 
 \ee

The de Rham complex can be deformed by adding the potential, which amounts to a similarity 
transformation $Q \to e^W Q e^{-W} $
(in contrast to the complex case, here $W$ must be a well-defined scalar function) or adding torsions \cite{MWu,FIS} 
 which amounts to a 
holomorphic
similarity transformation with the operator $\exp\{{\cal B}_{MN} \psi^M \psi^N + \cdots \}$.  

Consider now the case of generic Hermitian $\omega_{AB}$. The supercharges can, again, be represented as in
\p{QRham}, but neither the vielbeins \p{vielreal} nor the ``metric'' \p{metr} are real anymore.
As is shown in \cite{quasi}, the corresponding Lagrangian has the same form as in
\p{121stand}, but with the complex Hermitian $g_{MN}$. Complexity of the metric means that  
this system cannot thus be interpreted anymore as a de Rham complex. It is something new. 

It is further shown in \cite{quasi} that this new complex
 can be obtained by a Hamiltonian
reduction of the Dolbeault complex for some special complex manifolds whose metric does not
depend on imaginary parts of complex coordinates.
Indeed, consider a manifold of the complex dimension $d$ with isometries corresponding to the imaginary
coordinate shifts.  We can then impose $d$ constraints
$G_j \Psi = \partial \Psi /\partial ({\rm Im}\, z^j) = 0$. These constraints commute with the Dolbeault
Hamiltonian 
(allowing for the Hamiltonian reduction) and the supercharges (meaning that the reduced 
system enjoys the same supersymmetry  as the original one). A not so difficult analysis 
 shows that, after such reduction,
the supercharge \p{QDolb} (it is convenient to write it in terms of the fermion variables with world indices  $\psi^j$)
goes over into \p{QRham}, written in terms of $\psi^M, \bar\psi^M$.  

In other words, the model of this type (we called it
  a {\it quasicomplex} sigma model  \cite{quasi}) can be obtained
from the basic model \p{freed} by a subsequent application of two operations: similarity transformation and Hamiltonian
reduction. And, as illustrated in Fig.1, the result does not depend on the {\it order} 
in which these operations are performed.

\begin{figure}
\label{hren}
 \begin{center}
        \epsfxsize=200pt
        \epsfysize=0pt
        \vspace{-5mm}
        \parbox{\epsfxsize}{\epsffile{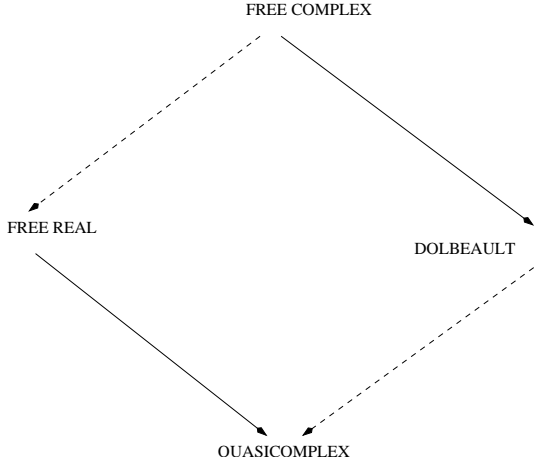}}
        \vspace{5mm}
    \end{center}
\caption{A rhombus of sigma models. 
The solid arrows stand for a similarity transformation and the dashed 
arrows --- for a Hamiltonian reduction. }
\end{figure}

We would also like to note here that the supercharges $Q = \sqrt{2} \psi_a \pi_a$ and $Q = \psi_A p_A$ 
can in principle be rotated by a similarity transformations with {\it antiholomorphic} operators like 
\be
\lb{antihol}
R \ =\ \exp\{ {\cal C}_{AB} \bar \psi_A \bar \psi_B \}\, .
 \ee
 Such models were never considered,  and it would be interesting to do so. 

\section{Extended supersymetries.}
An arbitrary similarity transformation \p{similDol} leaves the supercharge $Q$ nilpotent and hence keeps
supersymmetry. For a model with extended supersymmetries, it is not always the case. I.e. the minimal
${\cal N} = 2$ supersymmetry is always kept, but if we want to preserve extended supersymmetries, the operator 
of similarity transformation should satisfy certain extra conditions. The same concerns the Hamiltonian reduction
procedure. If we want the reduced model to keep all the supersymmetries of the original one, the constraints should
commute with all supercharges. 

Consider some examples.

\subsection{K\"ahler sigma models.}
Consider the real free dynamics with the supercharges \p{Qreal} and the Hamiltonian $H = p_A^2/2$. Assume that the
dimension $D$ is even. It is easy to see that one can add extra pairs of nilpotent supercharges whose anticommutator
gives the same Hamiltonian and which commute with $Q, \bar Q$. They  form thereby together
with \p{Qreal} an extended supersymmetry algebra. Each such pair of supercharges can be represented as
  \be
\lb{Sreal}
 S = p_A I_{AB} \psi_B, \ \ \ \ \ \ \ \ \ \ \ \ \, \ \bar S =  p_A  I_{AB} \bar\psi_B \, .
  \ee     
where $I_{AB}$ is a real antisymmetric matrix satisfying the condition $I^2 = -1$. 
\footnote{It is, of course, recognizable as a 
flat complex structure matrix.}  

Suppose that there is only one such extra pair. 
 It is convenient to introduce complex coordinates $\{z_{a = 1, \ldots, D/2}, \bar z_{a = 1, \ldots, D/2} \}$ 
(the eigenvectors of $I$), 
trade $\psi_A$ for $\chi_{a\alpha}$, $\alpha = 1,2$ 
and deal with the supercharges 
\footnote{When $D=2$ and $I_{AB} = \epsilon_{AB}$, the explicit conventional form of the combinations
entering \p{TTbar} is  $$\chi_{1} = \frac 1 {\sqrt{2}} (\psi_{1} + i \psi_{2}), \ \ \ 
 \chi_{2} = \frac 1 {\sqrt{2}} (\bar \psi_{1} + i \bar \psi_{2}), \ \ \ \ \ \pi  =  \frac 1 
{\sqrt{2}} (p_1 - ip_2) \, $$
and complex conjugates.  
This is trivially generalized to any even $D$, if choosing $I = {\rm diag} (\epsilon, \ldots, \epsilon)$.}
  \be
\lb{TTbar}
 T_\alpha \ =\ \sqrt{2} \pi_a \chi_{a\alpha}, \ \ \ \ \ \ \ \ \ \ \ \  
 \bar T_\alpha \ =\ \sqrt{2} \bar \pi_a \bar \chi_{a\alpha} \, .
  \ee
They represent the following linear combinations of $Q,S, \bar Q, \bar S $,
 \be
\lb{lincomb}
 T_1 \ =\ \ \frac {Q + i S}{\sqrt{2}}  , \ \ \ \ T_2 =  \frac {\bar Q + i\bar S}{\sqrt{2}} , 
\ \ \ \ \bar T_1 = 
 \frac{\bar Q - i\bar S}{\sqrt{2}}, \ \ \ \ 
\bar T_2 =  \frac {Q - iS}{\sqrt{2}} \, .
 \ee
Let us perform now 
 a similarity transformation 
\be
\lb{similKahl}
T_\alpha  \ \to \  e^{\omega_{ab} \chi_{a\beta} \bar \chi_{b\beta}} T_\alpha   
e^{-\omega_{ab} \chi_{a\beta} \bar \chi_{b\beta}} 
  \ee
with a Hermitian $\omega_{ab}$. In the full analogy with \p{Qomega}, \p{QDolb}, we obtain
\be
\lb{QomegaKahl}
T_\alpha \ = \ \sqrt{2} \chi_{d\alpha} \left( e^\omega \right)_{dc} \left[ \pi_c - i \left( e^\omega \right)_{ae} 
\left( \partial_c  e^{-\omega} \right)_{eb} \chi_{a\beta} \bar \chi_{b\beta} \right] \, , 
 \ee
which can be represented as 
  \be
\lb{QKahl}
T_\alpha \ =\  \sqrt{2} \chi^j_\alpha \left( \pi_j  + i\Omega_{j, \bar b a} \chi_{a\beta} 
\bar \chi _{b\beta} \right) \, .
 \ee
However, in a generic case, the supercharges \p{QKahl} and their conjugates {\it do} not form the 
extended supersymmetry algebra --- 
the anticommutator $\{T_1, \bar T_2\}$ does not vanish and $\{T_2, \bar T_2\}$ does not 
coincide with $\{T_1, \bar T_1\}$. But,
for some {\it special} $\omega$ when the  metric 
$e^{\omega^\dagger } e^{\omega}$ is K\"ahler, the ${\cal N} = 4$ superalgebra holds 
\cite{McFar,howto}. 
In this case, 
$\Omega_{j, \bar b a} \ = e^{\bar k}_{\bar b} \partial_j e^{\bar a}_{\bar k}$
 entering \p{QKahl} are the standard torsionless spin connections.

As is written in \p{lincomb}, the supercharge $T_1$ is expressed via $Q$ and $S$, while the supercharge $T_2$ 
is expressed via $\bar Q$ and $\bar S$.
This means that 
 the similarity transformation \p{similKahl} of  $T_\alpha$ corresponds to a rather complicated 
transformation (not a similarity one)
 of the ``original'' supercharges $Q,S$. 

Alternatively, one can rotate, as we have seen, the flat supercharge $Q$ with 
the operator \p{RAB} 
to obtain the de Rham supercharge \p{QRham}. If the metric thus obtained is K\"ahler, 
the same similarity transformation applied to $S$  gives us 
the second pair of supercharges, keeping the ${\cal N} = 4$ supersymmetry.
Indeed, the result of such rotation of $S^{\rm flat}$ is
\be
\lb{Srot}
S^{\rm rotated} \ =\ \psi^M I_M^{\ N} \left(p_N - i \Omega_{N,  AB} \psi_A \bar \psi_B \right)\ .
\ee
An accurate proof of the fact that, for K\"ahler manifolds, the operators
\p{Srot} and \p{QRham} together with their conjugates satisfy the same 
commutation relations of the ${\cal N} = 4$ superalgebra 
as the flat supercharges is presented in the Appendix.

Note that this similarity transformation of $Q$ and $S$ corresponds to a complicated transformation of
$T_\alpha$ and $\bar T_\alpha$.    

\subsection{Hyper-K\"ahler sigma models.}

In  flat space of real dimension 4 or multiple integer of 4, $D = 4m$, 
one can write three additional pairs of supercharges,
  \be
\lb{S123}
 S^{1,2,3} = p_A I^{1,2,3}_{AB} \, \psi_B, \ \ \ \ \ \ \ \ \ \ \ \ \, \ \bar S^{1,2,3} =  
p_A  I^{1,2,3}_{AB} \, \bar\psi_B \, .
  \ee  
associated with three complex structures $I^{1,2,3}$ satisfying the quaternionic algebra.
 \be
\lb{quatern}
 I^a I^b = -\delta^{ab} +  \epsilon^{abc} I^c \ .
 \ee
By an orthogonal transformation, one can bring them into a canonical form
 \be
\lb{Iacanon} 
(I^a)_{AB} =   {\rm diag} \{ -\eta^a_{\mu\nu}, \ldots, -\eta^a_{\mu\nu} \}\, ,
 \ee 
where $\eta^a_{\mu\nu} $ are 't Hooft symbols.

The supercharges $Q, \bar Q, S^a, \bar S^a$ form the ${\cal N} = 8$ supersymmetry algebra, which the 
flat Hamiltonian $H=p_A^2/2$ thus enjoy.
In a special case when the metric of the manifold corresponding to the de Rham supercharges
 \p{QRham} obtained after a similarity transformation of the 
flat supercharge $Q$ is hyper-K\"ahler, the Hamiltonian thus obtained 
also admits 3 extra pairs of conserved supercharges \cite{A-GF}. 
Three of these six
extra supercharges are obtained from the flat supercharges $S^a$  in \p{S123} by the {\it same}
similarity transformation as the one
 applied to the supercharge $Q$.
Their explicit form is 
\be
\lb{Sarot}
S^a_{\rm rotated} \ =\ \psi^M I_M^{a\ N} \left(p_N - i \Omega_{N,  AB} \psi_A \bar \psi_B \right)\ .
\ee
The explicit proof of the fact that, in the hyper-K\"ahler case,
 the supercharges \p{Sarot} and their conjugates form together with the supercharge \p{QRham} and its conjugate
the standard ${\cal N} = 8$ superalgebra is given in the Appendix.

Alternatively, in the full analogy with the K\"ahler case, 
the supercharges \p{S123} together with \p{Qreal} can be 
rearranged  by defining
 \be
T_\alpha \ =\ \left( \gamma_\mu p_\mu^k \chi^k \right)_\alpha 
 \ee
and their conjugates $\bar T_{\alpha}$, In the expression above,
 $\mu, \alpha  = 1,2,3,4,\  k = 1,\ldots , m$, $\gamma_\mu$ are Euclidean 4-dimensional 
$\gamma$ - matrices and $\chi^k_\alpha$ are Dirac 4-component spinors.

One can then rotate $T_\alpha$ with a matrix 
$e^R = \exp\{ \omega_{kq} \chi^k_\alpha \bar \chi^q_\alpha \}$ and, when the
metric $e^{\omega^\dagger} e^{\omega}$ thus obtained is hyper-K\"ahler,  arrive at the hyper-K\"ahler
supercharges in the form written in \cite{Fig}. 

This similarity transformation of $T_\alpha$ corresponds to a complicated transformation 
of $Q, \bar Q, S^a, \bar S^a$. On the other hand, the similarity transformation of $Q$ and $S^a$ discussed above
corresponds to a complicated transformation of
$T_\alpha,\ \bar T_\alpha$.

\subsection{HKT and OKT.}

K\"ahler and hyper-K\"ahler sigma models represent special cases of the generic de Rham sigma model that admit
extra supercharges. There are also special complex Dolbeault sigma models admitting extra supersymmetries.
In particular, in flat complex space of even complex dimension $d=2m$, 
one can add to the supercharges \p{freed}, the supercharges
   \be
  \lb{extraSDol}
  S = \sqrt{2} \epsilon_{ab} \psi_a^k  \bar \pi_b^k, \ \  \bar S = \sqrt{2}  \epsilon_{ab}  \bar  
\psi_a^k \pi_b^k \, .  
   \ee 
Performing a similarity transformation \p{similDol} with a special class of $R$ respecting ${\cal N} = 4$ supersymmetry, 
one obtains the so called HKT sigma models \cite{HP,GPS}. 
To see that, consider the simplest $d=2$ case. By introducing real and imaginary
parts $\psi_{A=1,2,3,4}$ of $\psi_{a=1,2}$, 
we can bring \p{freed}, \p{extraSDol} to  a more familiar form including 
 four pairwise anticommuting Hermitian supercharges \cite{Wipf,HKT},
 \be
\lb{QHKT}
{\cal Q} \ = \ \psi_A p_A \, , \ \ \ \ {\cal S}^a \ =\  -\eta^{a}_{AB}\psi_A p_B\, , 
 \ee 
where $a = 1,2,3$ and $\eta^{a}_{AB}$ are `t Hooft symbols \cite{Hooft}.

This looks similar to \p{S123}, but the variables $\psi_A$ are now {\it Hermitian}. They
 satisfy the Clifford algebra, $\{\psi_A, \psi_B\} = \delta_{AB}$ and can be mapped to 
gamma  matrices. For our approach, we need, however the {\it holomorphic} supercharges $Q,S$ in \p{freed}, \p{extraSDol}.
It is they who are going to be rotated with a similarity transformation \p{similDol}, the supercharges $\bar Q, \bar S$ 
being transformed with $e^{-R^\dagger}$. 
Let us choose $R = g(x)\psi_a \bar \psi_a$.
We derive
  \be
\lb{QSconf}
Q \ \to \ \sqrt{2} f\psi_a \left( \pi_a + \frac {i\partial_a f}{f} \psi_c \bar \psi_c \right) \, , \nn 
S \ \to \ \sqrt{2} f \epsilon_{ab} \psi_a \left( \bar \pi_b + \frac {i\bar\partial_b f}{f} \psi_c \bar \psi_c \right) \, .
 \ee
with $f = e^g$. The corresponding metric is conformally flat, $ds^2 = e^{-2g} dx_\mu^2$, the simplest HKT metric
\footnote{The formal definition is the following \cite{Verb}. 
The manifold is called HKT  if it admits three complex structures satisfying \p{quatern} which are covariantly 
constant with respect to one and the same affine connection. Generically (and, in particular, for conformally 
flat 4-dimensional manifolds), this connection (the Bismut connection) involves torsions. 
In some special cases, the torsions vanish, the Bismut connection boils
down to the usual Levy-Civita connection and the HKT (i.e. {\it hyper-K\"ahler with torsion} ) 
manifolds boil down hyper-K\"ahler manifolds.}.     
The supercharges \p{QSconf}
(derived first in \cite{Maxim}) together with their conjugates
  satisfy the ${\cal N} =4$ superalgebra.

Dolbeault models enjoying ${\cal N} = 8$ supersymmetry are known as OKT manifolds. Their real 
dimension is a  multiple integer of 8. Indeed,  flat 8-dimensional space admits 8 anticommuting Hermitian 
supercharges: the supercharge ${\cal Q} = p_A \psi_A$, where $A=1,\ldots,8$ and $\psi_a$ 
are now {\it real}, and the supercharges
${\cal S}^{a = 1, \ldots, 7} = (\Gamma^a)_{AB} p_A \psi_B $, where $\Gamma^a$ are 7-dimensional 
real antisymmetric gamma matrices. One of the convenient representations for the latter is
  \be
\lb{Gamma7} 
 \Gamma^{1,2,3} \ =\ \left(  \begin{array}{cc} - \bar \eta^a & 0 \\ 0  & \bar \eta^a \end{array} \right)\, , \ \ \  
 \Gamma^{4,5,6} \ =\ \left(  \begin{array}{cc} 0  &  \eta^a \\  \eta^a & 0  \end{array}  \right) \, , \ \ \ 
   \Gamma^7 \ =\ \left(  \begin{array}{cc}  0 & 1 \\ -1 & 0 \end{array}  \right) , 
 \ee
One can also relate the matrix elements in \p{Gamma7} to the structure constants of 
octonion algebra\cite{GPS} --- that is why the name
OKT (octonionic K\"ahler with torsion) was choosen. 
\footnote{In contrast to HKT, the matrices \p{Gamma7} cannot, of course, satisfy  non-associative octonion algebra.
Moreover, one cannot choose among the matrices \p{Gamma7} three matrices satisfying the quaternionic algebra \p{quatern}. 
This means that an OKT manifold need not to be an HKT manifold. Note also that one can deform the 
flat model with breaking ${\cal N} = 8$ supersymmetry but keeping the ${\cal N} = 4$ supersymmetry 
associated with the unity matrix and, say, the matrices $\Gamma^{1,2,3}$. One obtains in this way a 
class of ${\cal N} =4$ models, so called {\it Clifford} models 
that {\it are} not HKT
\cite{GPS,Hull,DI,Fedoruk}. }
The simplest example of a nontrivial OKT manifold is a conformally flat 8-dimensional manifold where the conformal factor
$f(x^M)$ represents a harmonic function, i.e. $f(x^M) = 1 + C|x^M|^{-7}$. 

The OKT models represent a particular case of  Dolbeault models with extra holomorphic torsions 
(see  \cite{Fedoruk} for a detailed discussion).
Thus, they can be obtained, as discussed above, from the flat models of the corresponding dimension by a 
similarity transformation of the
supercharge $Q$ in \p{freed}. It remains to be seen whether one can conveniently define in this case three
 other complex supercharges  obtained from the flat ones by the  same similarity transformation.

\subsection{Reduced models.}
Consider an HKT model on a 4-dimensional conformally flat manifold with the supercharges \p{QSconf}. Suppose that the 
conformal factor $f(x^M)$ does not depend on one of the variables, say, $x^4$. It is then straightforward to observe that 
the operator $\hat p_4$ commutes with the supercharges and the Hamiltonian and can thus be used to 
perform Hamiltonian reduction.
 As a result, we obtain a ${\cal N} = 4$ supersymmetric QM model describing dynamics on a conformally 
flat 3-dimensional manifold.
This model \footnote{A ({\bf 3}, {\bf 4}, {\bf 1}) - model in the notation of \cite{PT}, where the first numeral stands 
for the number of bosonic dynamic degrees of freedom, the second - the number of fermionic d.o.f. 
and the third - the number of auxiliary
fields in superfield description.} was first constructed in \cite{1987} and described in superfield language in
\cite{IS91,BP}. Taking a $4m$ dimensional HKT with the metric not depending on $m$ variables, one obtains a generalised
$3m$ - dimensional model considered in \cite{IS91}.

One can, of course, consider many other HKT models  (or, in the language of \cite{PT}, the models with several
{\it root} \cite{Gates} 
({\bf 4}, {\bf 4}, {\bf 0}) multiplets) living on  manifolds with various isometries. 
Factorizing over these isometries gives
a multitude of models. Hamiltonian reduction of the model in 
a flat or conformally flat 4-dimensional  space with respect to its $U(1)$ isometry 
was considered in \cite{BelNerYer}. One obtains in such a way the ${\cal N} = 4$ models with an extra magnetic monopole
\cite{RitCrom,1987}.
For another example, one can take a flat ${\cal N} =4$ model endowed with a self-dual instanton
field \footnote{It was mentioned  above that a non-Abelian gauge field can be brought 
about by a matrix-valued similarity transformation. The transformations of this kind that give a self-dual gauge field
respect the ${\cal N} = 4$ supersymmetry of the flat model \cite{Maxim}.},  
 \be
 {\cal A}_\mu  =  \frac {2 \eta^a_{\mu\nu} x_\nu t^a }{x^2 + \rho^2}
 \ee
The supercharges of this model were presented in Ref. \cite{Maxim} in the form
 \be
\lb{Qsdual}
Q_\alpha = (\sigma_\mu \bar \psi)_\alpha ( p_\mu -   {\cal A}_\mu) \,, \ \ \ \ \ \ \ \ \ \ 
\bar Q^\alpha = (\psi \sigma_\mu^\dagger )^\alpha ( p_\mu -  {\cal A}_\mu)
 \ee
 with the following conventions 
(an Euclidean counterpart of the Wess and Bagger notations \cite{WB}):

{\it (i)} $(\sigma_\mu)_{\alpha \dot{\beta}} = \{i, \vecg{\sigma} \}_{\alpha \dot{\beta}}$,
$(\sigma_\mu^\dagger)^{ \dot{\beta} \alpha} = \{-i, \vecg{\sigma} \}^{\dot{\beta} \alpha }$. 

{\it (ii)} The fermion variables 
$\psi_{\dot{\alpha}}$ and $\bar\psi^{\dot{\alpha}} = (\psi_{\dot{\alpha}})^\dagger$ \  ($\{\bar\psi^{\dot{\alpha}},
\psi_{\dot{\beta}}\}  = \delta^{\dot{\alpha}}_{\dot{\beta}}$)    carry
only the dotted indices (in constrast to the supercharges \p{Qsdual} having undotted indices; in Euclidean space,
the $SU(2)$ groups acting on the dotted and undotted spinors are completely unrelated).

{\it (iii)} The indices are raised and lowered with $\epsilon_{\dot{\alpha} \dot{\beta}} = - 
\epsilon^{\dot{\alpha} \dot{\beta}}$.

  One can observe now that the supercharges \p{Qsdual} and hence the Hamiltonian 
commute with the $SU(2)$ generators,  
 \be
\hat L^a = 2t^a - i\eta^a_{\mu\nu} \left( x_\mu \partial_\nu + \frac 14 \psi \sigma^\dagger_\mu \sigma_\nu 
\bar \psi \right)   \, . 
  \ee
 Performing the Hamiltonian reduction with respect to  $\hat L^a$ gives us a 
({\bf 1}, {\bf 4}, {\bf 3}) model with only one dynamic bosonic degree of freedom. 
At the distances much larger than the instanton
size $\rho$ , the Hamiltonian thus obtained should go over to the conformal matrix 
Hamiltonian derived in \cite{IFL},
  \be
\lb{H143}
H = \frac 12 \left( p^2 + \frac 3{4x^2} \right) + \frac {2i t^a \, (\psi \sigma^a \bar \psi )}{x^2} \, .
 \ee
In that paper, also a 2-center model not enjoying the rotational symmetry was worked out. Probably, it can also be
obtained from a certain known model by applying two operations 
(similarity transformation and Hamiltonian reduction). It would
be interesting to see whether it is the case and, if yes, in a what particular way.
The same concerns many particular SQM models with ${\cal N} = 4$ and ${\cal N} = 2$ supersymmetries  
 constructed in recent \cite{BelKriv}. 
They were constructed using  ``semi-dynamic'' spin variables technique \cite{spinvar}.
The mathematical structure of these models, their {\it raison d'\^etre} is, however, not clear by now.  
It would be interesting to find out  by what particular operations from what particular 
known models are they  obtained.   

The OKT models with one or several root ({\bf 8}, {\bf 8}, {\bf 0}) -  multiplets also generate many different models 
after Hamiltonian reduction.
One of them is the beautiful   ${\cal N} = 8$  ({\bf 5}, {\bf 8}, {\bf 3}) - model with the metric 
$ds^2 = (1 + C/r^3) (dx^M)^2$ \cite{DE,sympl,ISsympl}
\footnote{It enjoys  $O(5) = Sp(4)$ symmetry and  has many common features with the   ({\bf 3}, {\bf 4}, {\bf 1}) model of 
Ref.\cite{1987} which is 3-dimensional and knows about $O(3) = Sp(2)$. That is why we called this type of sigma models
{\it symplectic}.}.  

There are many others.

\section{Gauge models}

It is known since Dirac that gauge theories can be interpreted as Hamiltonian systems 
involving first class constraints, the operators
$\hat G^a$ commuting with the Hamiltonian. One then performs  a Hamiltonian reduction with respect to $\hat G^a$ such that the
 Hilbert space of the large system 
is reduced to the small {\it physical} Hilbert space including only the wave functions 
annihilated under the action of $G^a$. This is rather
similar to the ideology of the present paper, but there is also an important dictinction.

 Consider a supersymmetric field theory  and assume that nothing depends on spatial coordinates 
(such dimensional reduction is, of course,
a variety of Hamiltonian reduction). We obtain a certain quantum mechanical system. To give a nontrivial enough 
but not too complicated example, consider the dimensionally reduced (2+1)-dimensional 
supersymmetric Yang-Mills model with $SU(2)$ gauge
group. The supercharges of the model are 
 \be
\lb{QSYM3}
 Q \ =\ \Pi_-^a \psi^a +  i B^a \bar \psi^a, \ \ \ \ \ \ \ \ \ \ \ \  \bar Q \ =\ \Pi_+^a \bar \psi^a -  i B^a  \psi^a \, ,
 \ee
where $\Pi_\pm^a = \Pi_1^a \pm i \Pi_2^a$  are holomorphic combinations of canonical momenta, 
$\psi^a$ and $\bar \psi^a$ are canonically conjugated
fermion variables, and
  \be
\lb{B}
B^a \ =\ \epsilon^{abc} \epsilon_{jk} A^b_j A^c_k = - \frac i2 \epsilon^{abc} A_-^b A_+^c 
 \ee
is the non-Abelian magnetic field strength. The coupling constant dependence is suppressed by choosing proper units.
  
The Hamiltonian $H = \{\bar Q, Q \}$ has the form 
  \be
\lb{HamSYM3}
H \ =\ \frac {1}2 (\Pi_j^a)^2 + \frac 14 [(A^a_j A^a_j)^2 - A^a_j A^a_k A^b_j A^b_k]    
+ \frac {i \epsilon^{abc}}2 [\bar \psi^a \bar \psi^b A^c_+ +  \psi^a  \psi^b A^c_-]
 \ee
 Note, however, that the supercharges written in \p{QSYM3} are not nilpotent. One easily derives
$Q^2 = A^a_- \hat G^a$, where
 \be
\lb{Ga}
 \hat G^a = \epsilon^{abc} (A_j^b \Pi_j^c - i \psi^b \bar \psi^c)
 \ee
are Gauss law constraints - generators of gauge transformations. If we want to keep supersymmetry, one {\it should}
perform the Hamiltonian reduction and impose the constraint $ \hat G^a \Psi = 0$.

One can now {\it resolve} the constraints, i.e. get rid of three variables on which nothing depends (gauge degrees of freedom) 
and to write
the Hamiltonian in  reduced phase space. For field theories, this is practically impossible, but, 
for quantum mechanical systems, it is
quite feasible. It is convenient to use the polar representation for the vector potential \cite{polar}. 
In the (2+1)-dimensional case, 
it boils down to
\be
\label{polar}
A^{\ a}_j \ =\ U_{jk} \Lambda^{\ b}_k V_{ba} \, ,
 \ee
  where $U_{jk} (\alpha)$ is an $O(2)$ matrix describing spatial rotations, $V_{ba} (\phi^a)$ is an $O(3)$ gauge
rotation matrix and $\Lambda^{\ b}_k$ is a quasidiagonal matrix,
\be
\label{Lambda}
\Lambda_k^{\ b} \ =\ \left( \begin{array}{ccc} a & 0 & 0 \\ 0 & b & 0 \end{array} \right)\ .
 \ee  
Choosing the gauge $V_{ba} = \delta_{ba}$, we are left with just three gauge invariant bosonic variables: $a,b,\alpha$.
In addition,  reduced phase space inherits  all three  complex fermion variables $\psi^a, \bar \psi^a$. 
The explicit expressions for the reduced
supercharges and Hamiltonian are rather complicated \cite{oncemore}. We will present here only the supercharges.
  \be
\label{Qresolved}
Q^{\rm cov} = e^{-i\alpha} g_0 \left[  \psi^1 \left( p_a - 
 i\frac {a p_\alpha +  b J^3 }{a^2 - b^2} \right) +
\psi^2 \left( -ip_b  + \frac {b p_\alpha +   
a J^3 }{a^2 - b^2} \right) \right. \nonumber \\
\left. - \psi^3 \left( \frac {J^2}a + \frac {i J^1}b \right)  \right] + 
\frac {iab}{g_0} \bar \psi^3 \ , \nonumber \\
\bar Q^{\rm cov} =  g_0 e^{i\alpha} \left[   \bar \psi^1 \left( 
p_a  + i\frac {a p_\alpha + 
 b J^3}{a^2 - b^2}  \right) +
 \bar \psi^2 \left( ip_b + \frac {b p_\alpha   + a J^3}{a^2 - b^2}  \right)  \right. \nonumber \\
  -  \left.  \bar \psi^3  \left( \frac {J^2 }a - \frac {iJ^1}b \right)\right]
 - \frac {iab}{g_0} \psi^3 \ ,
 \ee
where $J^a = i\epsilon^{abc} \psi^b \bar \psi^c$.

This is a kind of sigma model, the metric in the space $\{a,b,\alpha\}$ induced by the flat metric in the space $\{A^a_j\}$ 
being nontrivial.  
The configuration space involves a complex fermion variable for each bosonic variable, i.e. the number of degrees of freedom 
is the same as for the de Rham complex. But it is not a de Rham system: for the latter the fermion charge is conserved, while
the supercharge $Q$ in \p{Qresolved} involves the terms $\propto \bar \psi$ on top of the terms $\propto \psi$.
   
  If our conjecture is true, the supercharges \p{Qresolved} can be obtained by a similarity 
transformation and Hamiltonian reduction 
from a free system. However, in this case, a pure similarity transformation of  the system \p{Qreal} 
of real dimension 3 would probably be 
not sufficient. Indeed, the only imaginable to us way to obtain the term 
$\propto \bar \psi^3$ out of  a "flat" supercharge
like $p_a \psi^{(a)} + p_b \psi^{(b)} + p_\alpha \psi^{(\alpha)}$  is applying 
an antiholomorphic transformation \p{antihol}. 
But such a transformation can generate only  the terms $\sim \bar\psi$ that multiply
canonical momenta and in addition the unwanted terms $\propto \psi \bar \psi \bar \psi$.

Thus, to derive \p{Qresolved}, one should start from a free system of  larger dimension. Indeed, as was discussed
above, the system \p{Qresolved} can be obtained by a Hamiltonian reduction of a more simply looking system
\p{QSYM3}, \p{HamSYM3} with extended phase space. But, in constrast to the examples with Hamiltonian reduction 
discussed in the previous section, such extended system
{\it is} not supersymmetric --- this is the distinction that we were talking about. The absence of supersymmetry 
in the large system can be traced back to the fact that, when writing \p{QSYM3}, \p{HamSYM3}, we have already partially 
fixed the gauge   (Wess-Zumino gauge) and got rid of some number of components in the spinor superfield 
$\Gamma_\alpha$ describing $3D$ SYM theory. 
Supersymmetry is {\it broken} by such partial gauge fixing and is restored when the gauge is fixed {\it completely}.
The question is thus reduced to the question whether this large SQM system involving all components of $\Gamma_\alpha$  
can be obtained by our recipe. We hope to address it in later studies.

The same set of question can be asked to a system obtained by the dimensional reduction of (3+1) - dimensional, 
(5+1)-dimensional, 
or (9+1)-dimensional SYM theory with or without extra matter multiplets.  For example,  pure (3+1) SYM theory involves
$3\cdot 3 - 3 = 6$ gauge invariant variables. The explicit expressions for its reduced supercharges and  
Hamiltonian are given in Ref.\cite{howto}. 

\section{Field theories.}

A natural question to ask is whether our conjecture formulated for SQM systems works also for field theories.

The fast answer that it should, because  quantum field theory is nothing but a SQM system with 
an infinite number of degrees of freedom.  
The devil is, as usual, in the details. 

Consider the simplest example - the 4D Wess-Zumino model. To make the things still simpler, 
let it be {\it free massless} WZ model with the Lagrangian
 \be
\lb{LWZ}
 L \ =\ \int d\vecg{x} \, \left[\partial_\mu \bar \phi \partial_\mu  \phi + 
i \psi \sigma_\mu \partial_\mu \bar \psi \right] \,  
 \ee
where $\psi^\alpha,\, \bar\psi^{\dot{\alpha}} $ are complex conjugate {\it Minkowskian} 
Weyl spinors and $(\sigma_\mu)_{\alpha \dot{\beta}} = 
(1, \sigma_j)_{\alpha \dot{\beta}}$. 
\footnote{Our conventions are almost the same as in Ref. \cite{WB}, but the metric is chosen with the opposite sign,
$\eta_{\mu\nu} = {\rm diag} (1,-1,-1,-1)$.} 
The corresponding Hamiltonian
 \be
\lb{HWZ}
 H \ =\ \int d\vecb{x} \, \left[ \bar \Pi  \Pi +  \partial_j \bar \phi \partial_j \phi  
- i \psi \sigma_j \partial_j \bar \psi \right] \,  
 \ee
is supersymmetric. There is the Weyl doublet of supercharges,
  \be
\lb{QWZ}
 Q_\alpha  \ =\ \sqrt{2} \int d\vecb{x} \, \left[\Pi \psi_\alpha +  \partial_j \bar \phi \,
 (\sigma_j)_{\alpha \dot{ \gamma}} \delta^{\dot{\gamma} \gamma} \psi_\gamma\right]\, , \nn
 \bar Q_{\dot{\alpha}}  \ =\ \sqrt{2} \int d\vecb{x} \, \left[\bar \Pi \bar \psi_{\dot{\alpha}}  +  
(\partial_j  \phi ) \bar \psi_{\dot{\gamma}}
\delta^{\dot{\gamma} \gamma}
(\sigma_j)_{\gamma \dot{\alpha}}  \right]\,  
 \ee
 They satisfy the algebra
 \be
\lb{algQWZ}
 \{Q_\alpha, \bar Q_{\dot{\alpha}}\}  \ =  \ 2(\sigma_\mu)_{\alpha \dot{\alpha}} P_\mu  = \ 
2\left[ \delta_{\alpha \dot{\alpha}}H 
 + (\sigma_j)_{\alpha \dot{\alpha}} P_j\right] \, ,
 \ee
where $\vecg{P}$ is the 3-momentum operator.

Put the system in a finite box of size $L$, which we set to 1, and expand 
$\phi(\vecb{x}), \psi(\vecb{x})$  in the Fourier series,
 \be
 \phi(\vecb{x}) \ =\ \sum_{\vecb{n}} \phi_{\vecb{n}} e^{2\pi i \vecb{n} \vecb{x}}, \ \ \ \ 
\psi(\vecb{x}) \ =\ \sum_{\vecb{n}} \psi_{\vecb{n}} e^{2\pi i \vecb{n} \vecb{x}}\, \nn
\bar \phi(\vecb{x}) \ =\ \sum_{\vecb{n}} \bar \phi_{\vecb{n}} e^{-2\pi i \vecb{n} \vecb{x}}, \ \ \ \ 
\bar \psi(\vecb{x}) \ =\ \sum_{\vecb{n}} \bar \psi_{\vecb{n}} e^{-2\pi i \vecb{n} \vecb{x}}\,
  \ee
 The Hamiltonian \p{HWZ} is expressed via the modes 
as follows,
\be
\lb{HWZmodes}
 H \ =\ \sum_{\vecb{n}} \, \left[ \bar \Pi_{\vecb{n}}   \Pi_{\vecb{n}}  + \left( 2\pi \vecb{n} \right)^2
 \bar \phi_{\vecb{n}}   \phi_{\vecb{n}}  -  
   2\pi n_j  \psi_{\vecb{n}}   \sigma_j  \bar \psi_{\vecb{n}}   \right] 
 \ee
($\Pi_{\vecb{n}} = \dot{\bar \phi_{\vecb{n}}}$) 
and the supercharges are
 \be
\lb{QWZmodes}
Q_\alpha &=& \  \sqrt{2} \sum_{\vecb{n}}  \left[ \Pi_{\vecb{n}} \psi_{\alpha \vecb{n}} - 2\pi i n_j \, 
( \sigma_j)_{\alpha \beta}
\psi_{\beta \vecb{n}}
 \bar  \phi_{\vecb{n}}  \right] \, , \nn
\bar Q_\alpha  &=& \  \sqrt{2} \sum_{\vecb{n}}  \left[ \bar \Pi_{\vecb{n}} \bar \psi_{\alpha \vecb{n}} +
 2\pi i n_j \,    \bar \psi_{\beta \vecb{n}} (\sigma_j)_{\beta \alpha }
   \phi_{\vecb{n}}   \right]
 \ee
(The finite box breaks Lorentz invariance and there is no point to distinguish 
the usual and dotted indices anymore).

This is an SQM model with an infinite number of degrees of freedom, indeed. 
One can observe, however, that  from the SQM viewpoint,
\begin{itemize}
\item  it is not a {\it basic} system  \p{TTbar} as, besides the terms $\propto \Pi \psi$, the supercharges
involve extra terms.
\item the supercharges \p{QWZmodes} satisfy not the standard ${\cal N} = 4$ superalgebra, 
but the algebra \p{algQWZ} 
involving the 3-momentum playing the role of a central charge.
\end{itemize}

Furthermore, the supercharges \p{QWZmodes} do {\it not} seem to be related to the basic supercharges 
$  \sum_{\vecb{n}}   \Pi_{\vecb{n}} \psi_{\alpha \vecb{n}}, \ 
 \sum_{\vecb{n}}  \bar \Pi_{\vecb{n}} \bar \psi_{\alpha \vecb{n}} $ by a similarity transformation. 

It is still possible to write down a similarity transformation of the 
{\it de Rham} free (in the SQM sense) supercharge 
${\cal Q} = p_A\psi_A$ such that 
the anticommutator of the transformed supercharge ${\cal Q}$ and its conjugate would give \p{HWZmodes}. 

Consider one particular term $H_{\vecb{n}}$ in the sum \p{HWZmodes}. 
Let first $\vecb{n} \neq 0$. One can observe that the matrix $n_j \sigma_j$ has two 
eigenvalues $\lambda_{1,2} = \pm \sqrt{\vecb{n}^2}$.  
If denoting by $\chi^{1,2}_{\vecb{n}}$ the corresponding normalized eigenvectors, one can represent
 \be
\lb{eigen}
n_j  \psi_{\vecb{n}}   \sigma_j  \bar \psi_{\vecb{n}}  \ 
= \ \sqrt{\vecb{n}^2}  (\chi^1_{\vecb{n}} \bar \chi^1_{\vecb{n}} -
\chi^2_{\vecb{n}} \bar \chi^2_{\vecb{n}})
  \ee
Then one can define
 \be
\lb{calQmodes}
 {\cal Q}_{\vecb{n}} = 
 \chi^1_{\vecb{n}} \left( P^1_{\vecb{n}} + 2i\pi f^1_{\vecb{n}} \sqrt{\vecb{n}^2}  \right)
+  \chi^2_{\vecb{n}} \left( P^2_{\vecb{n}} - 2i\pi f^2_{\vecb{n}} \sqrt{\vecb{n}^2}  \right) 
 \ee
($P^{1,2}_{\vecb{n}}/\sqrt{2}$ and $f^{1,2}_{\vecb{n}}/\sqrt{2}$ being the real and imaginary parts of
$\Pi_{\vecb{n}}$ and $\phi_{\vecb{n}}$). The operator \p{calQmodes}  is nilpotent and 
 \be
\lb{Hn}
 \{{\cal Q}_{\vecb{n}}, \bar {\cal Q}_{\vecb{n}} \}   \ =\ 2 H_{\vecb{n}} \, .
 \ee
This holds also in the case of degenerate eigenvalues, $\vecb{n} = 0$, if choosing for $\chi^{1,2}_{\vecb{n}}$ arbitrary
orthonormal vectors. 

Obviously, the full supercharge 
 \be
\lb{QsumQn}
{\cal Q} = \sum_{\vecb{n}} {\cal Q}_{\vecb{n}}
 \ee
is also nilpotent, and 
  $\{ {\cal Q}, \bar {\cal Q} \}/2$ gives the Hamiltonian \p{HWZmodes}.
In fact, this model represents a multidimensional (with an infinity of degrees of freedom)
generalization of the model \p{QWit}, \p{HWit} with the superpotential
 \be
\lb{Wsumn}
W =  \sum_{\vecb{n}} \pi \sqrt{\vecb{n}^2} \left[ (f^1_{\vecb{n}})^2 - (f^2_{\vecb{n}})^2 \right]
 \ee
Hence, the
 supercharge \p{QsumQn} can be  related to the "free" supercharge
\be
\lb{Q0}
{\cal Q}^{(0)} = \sum_{\vecb{n}} \left( P^1_{\vecb{n}} \chi^1_{ \vecb{n}} +  P^2_{\vecb{n}} \chi^2_{ \vecb{n}} \right) 
 \ee
by the similarity transformation \p{simil}.
 
Thus, we showed that, when expanded over the modes, the field theory  \p{LWZ} can be obtained by a 
similarity transformation from the "free" supercharge \p{Q0}, as it should according to our conjecture.

The  problem, however, is
 that this transformation and both the supercharges 
\p{Q0} and  \p{calQmodes} are highly nonlocal. None of them does  have therefore a lot of physical sense.

It might be more reasonable to treat the Lorentz-invariant model \p{LWZ} as the free one and ask whether the supercharges 
of the {\it interacting} WZ model could be obtained from the free supercharges \p{QWZ} by a similarity transformation.
Unfortunately, the answer to this question seems to be negative.  

Indeed, the interacting WZ supercharges are obtained
from the free supercharges \p{QWZ} by adding the following extra terms, 
   \be
\lb{QWZint}
 Q_\alpha^{\rm int}   \ =\  Q_\alpha^{\rm free} + i \sqrt{2} \int d\vecb{x} \, {\cal W}'(\bar \phi) 
\delta_{\alpha \dot{\alpha}} \bar \psi^{\dot{\alpha}}
 \, , \nn
   \bar Q_{\dot{\alpha}}^{\rm int}   \ =\  \bar Q_{\dot{\alpha}}^{\rm free} - i \sqrt{2} \int d\vecb{x} \, 
{\cal W}'( \phi) 
\delta_{\dot{\alpha} \alpha }  \psi^{\alpha }
 \, ,
 \ee
 where ${\cal W}(\phi)$ is the WZ superpotential (having nothing to do
with \p{Wsumn}). And we do not see how to obtain $Q_\alpha^{\rm int}$ out of $Q_\alpha^{\rm free}$ 
by a similarity transformation. The problem is the same as with the supercharge $Q^{\rm cov}$ in Eq. \p{Qresolved}. 
The only known to us way to generate the term $\propto \bar\psi$ in the supercharge $Q$ is to apply an antiholomorphic 
transformation, like in  \p{antihol}. But such a transformation would produce the terms where $\bar\psi$ 
is multiplied by $\Pi$ or else the terms $\propto \bar\psi\bar\psi \psi$...

\section{Discussion and outlook.}
Our main point is the 

\vspace{1mm}

{ \bf Conjecture}. {\it Any SQM model can be related to a free complex model \p{freed} by a combination of two operations:
(i) similarity transformation of properly chosen complex supercharges and (ii) Hamiltonian reduction.}

\vspace{1mm}

We have not proven it, but checked in many nontrivial examples. In particular, we discussed nontrivial 
sigma models with extended
supersymmetries and showed that, for
the K\"ahler de Rham sigma models, hyper-K\"ahler de Rham sigma models, and HKT models, 
all complex supercharges are derived from the
free supercharges by the {\it same} similarity transformation.

On the other hand, we have {\it not} seen yet that this conjecture also works for gauge SQM models. We noted that this question
 can be clarified if analyzing the supercharges and the Hamiltonian of gauge models 
{\it before} gauge is fixed such that supersymmetry
is realized linearly.

In Sect. 5, we discussed field theories and found out that, though our recipe seems to work (it works in the simplest 
case that we analyzed), the similarity transformation turns out to be highly nonlocal and therefore useless.

The last remark is the following. Philosophically,  similarity transformations considered in this paper
remind the Nicolai map \cite{Nicolai}. In both cases, an interacting model is related to a free one. 
However, the ways they are  
related are rather different. The Nicolai map is a nonlocal transformation of {\it bosonic} 
variables that renders the functional 
integral for the index Gaussian allowing one to do it. For the simplest nontrivial SQM model \p{HWit}, 
it amounts to the change
 \be
\lb{Nicolai}
 \dot{x} \pm W'(x) \ \to  \ \dot{y}
  \ee
It is not similar to the local transformation of the supercharges studied in this paper, 
though more meditations in this direction are definitely welcome.  

I am indebted to S. Fedoruk and E. Ivanov for many illuminating discussions.

\section*{Appendix: K\"ahler and hyper-K\"ahler superalgebras.}

We will prove here some well-known to mathematicians facts \cite{Verb} in the SQM language understandable to physicists.

We will be interested in extra supersymmetries of the de Rham ($\{{\bf 1}, {\bf 2}, {\bf 1}\}$) 
sigma models that come into existence when the manifold is K\"ahler.
We are using the method that was used earlier to study extra supersymmetries for the Dolbeault 
($\{{\bf 2}, {\bf 2}, {\bf 0}\}$)) models 
for hyper-K\"ahler \cite{Wipf} and HKT \cite{HKT} manifolds.

We start with reminding

\vspace{.2cm}

{\bf One of the possible definitions.}  The complex
 manifold is called K\"ahler if its complex structure tensor,
\footnote{For the manifold to be genuinely {\it complex} and not just {\it almost complex}, the tensor
$I_{MN}$ should satisfy besides \p{complstruc} also a certain integrability condition. But if the tensor $I$ is covariantly
constant, this condition is satisfied automatically.} 
 \be
\lb{complstruc}
I_{MN} = -I_{NM}, \ \ \ \ \ \ \ \ \ \ I_M^{\ P} I_P^{\ N} = - \delta_M^{\ N}
 \ee
 is covariantly constant,
   \be
\lb{nablaI}
{\cal D}_P I_{MN} = \partial_P I_{MN} - \Gamma^S_{PM} I_{SN} - \Gamma^S_{PN} I_{MS} =  0\ .
 \ee 

{\bf Similarly}:   The  manifold is called hyper-K\"ahler if it admits three different covariantly constant 
complex structures $I^a$ satisfying the quaternionic algebra \p{quatern}.  

\vspace{1mm}

We will prove now two theorems.

\begin{thm}
 If the manifold is K\"ahler, the supercharges \p{QRham}, \p{Srot} and their conjugates
satisfy the 
${\cal N} =4$ superalgebra with the only nonvanishing anticommutators
 \be
\lb{N4alg}
\{Q, \bar Q\} = \{S, \bar S\}  \ .
 \ee
\end{thm}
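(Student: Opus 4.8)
The plan is to reduce every relation in \p{N4alg} (together with the vanishing of all the other anticommutators) to the flat $\mathcal{N}=4$ algebra of Section 3.1 wherever possible, and to treat only the genuinely \emph{mixed} relations by an explicit covariant computation in which the K\"ahler condition \p{nablaI} is the sole extra input.

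First I would record what comes for free. The supercharge $Q$ of \p{QRham} is the image of the nilpotent flat supercharge $\psi_A p_A$ under the similarity transformation $X\mapsto e^RXe^{-R}$ with $R=\omega_{AB}\psi_A\bar\psi_B$ and real symmetric $\omega$, so it is nilpotent; $\bar Q$ is defined in \p{QbarRham} as its adjoint with respect to the covariant measure, so $\{Q,\bar Q\}=2H$ is the standard $\mathcal{N}=2$ statement $\{d,\delta\}=2\Delta$ \cite{Witgeom}. The same Hadamard computation \p{Hadamard} that produced \p{QRham} from \p{Qreal}, run with the constant matrix $I$ inserted, gives $S=e^RS^{\rm flat}e^{-R}$ with $S^{\rm flat}=p_AI_{AB}\psi_B$; note also that $R$ commutes with the scalar factor $(\det g)^{-1/2}$ in \p{QbarRham}, so $\bar Q$ and $\bar S$ are similarity transforms of $\bar Q^{\rm flat},\bar S^{\rm flat}$ by $e^{-R}$ dressed with a harmless overall scalar. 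Since conjugation is an algebra automorphism it preserves every anticommutator among operators conjugated by the \emph{same} element: in flat space $(S^{\rm flat})^2=0$ (symmetric coefficient against $\psi_B\psi_D$) and $\{Q^{\rm flat},S^{\rm flat}\}=0$ ($\psi$'s anticommute), hence $S^2=0$ and $\{Q,S\}=0$ identically, for any $\omega$; taking adjoints, $\bar S^2=0$ and $\{\bar Q,\bar S\}=0$ as well.

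The relations that genuinely require the manifold to be K\"ahler are $\{Q,\bar S\}=0$, $\{\bar Q,S\}=0$ and $\{S,\bar S\}=\{Q,\bar Q\}$, since here one factor carries $e^{R}$ and the other $e^{-R^\dagger}=e^{-R}$ and the automorphism trick fails. For these I would pass to the covariant form: $Q,\bar Q,S,\bar S$ act on forms as $d,\delta$ and their $I$-twists $d_I,\delta_I$, writing $Q=\psi^M\nabla_M$, $S=\psi^MI_M^{\ N}\nabla_N$ with $\nabla_M$ carrying the spin connection $\Omega_{M,AB}$ of \p{Om}, and computing the anticommutators directly along the lines of the Dolbeault/hyper-K\"ahler and HKT treatments of \cite{Wipf,HKT}. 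In $\{Q,\bar S\}$ the $\psi$--$\bar\psi$ pairs contract: the two-derivative piece is $\propto I^{MN}\nabla_{(M}\nabla_{N)}$ and vanishes by antisymmetry of $I$; the piece from moving $\nabla$ past $I$ is $\propto\nabla_MI^{MN}$ and vanishes by \p{nablaI}; the residual curvature contraction is of type $R_{MNPQ}I^{PQ}$, which vanishes because $\nabla I=0$ makes the curvature two-form commute with $I$, so the pair symmetry of the Riemann tensor and the first Bianchi identity kill it. The relation $\{\bar Q,S\}=0$ follows by the conjugate computation. For $\{S,\bar S\}$ the two-derivative piece is $g^{MN}I_M^{\ P}I_N^{\ Q}\nabla_P\nabla_Q$; commuting the $I$'s through $\nabla$ (allowed by \p{nablaI}) and using $I^2=-\mathbf 1$ turns this into $g^{PQ}\nabla_P\nabla_Q$, exactly the leading piece of $\{Q,\bar Q\}$, while every remaining three- and four-fermion and curvature term is carried term by term into the corresponding term of $\{Q,\bar Q\}$ once $\nabla I=0$ is used, giving $\{S,\bar S\}=\{Q,\bar Q\}=2H$.

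The main obstacle is the fermionic bookkeeping in this last step: keeping flat frame indices $A,B$ distinct from curved indices $M,N$, tracking the three- and four-fermion terms generated by the spin connection, and invoking the covariant constancy of $I$ at precisely the places where a curvature contraction must be converted into a Bianchi-vanishing one. The calculation is routine but sign- and index-sensitive; it is essentially the de Rham counterpart of the Dolbeault computations of \cite{Wipf,HKT}, which is why those references are the natural model to follow.
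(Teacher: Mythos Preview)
Your first step coincides with the paper's: nilpotency of $Q$ and $S$ and $\{Q,S\}=0$ are inherited from the flat algebra because both supercharges arise from the \emph{same} similarity transformation $e^R$. For the mixed relations $\{Q,\bar S\}$ and $\{S,\bar S\}$ the paper takes a different and much shorter route. It introduces the operators $F_+=\frac12 I_{MN}\bar\psi^M\bar\psi^N$ and $F_-=\frac12 I_{MN}\psi^M\psi^N$ (the Lefschetz operators, in geometric language) and, using only $\nabla_M\bar\psi^N=0$ together with the K\"ahler condition $\nabla I=0$, shows that $[Q,F_+]=-\bar S$ and $[S,F_+]=\bar Q$. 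Then $\{Q,\bar S\}=\{Q,[F_+,Q]\}$ vanishes by the graded Jacobi identity combined with $Q^2=0$, while $\{S,\bar S\}=\{S,[F_+,Q]\}=\{[S,F_+],Q\}=\{\bar Q,Q\}$ by Jacobi and $\{Q,S\}=0$. No curvature tensor ever appears explicitly; this is the SQM transcription of the standard proof of the K\"ahler identities.

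Your brute-force route can in principle be made to work, but the sketch has a genuine gap in the curvature step. The contraction $R_{MNPQ}I^{PQ}$ does \emph{not} vanish on a general K\"ahler manifold: in complex coordinates it equals $2i\,h^{k\bar l}R_{MNk\bar l}$, which is the Ricci form and is nonzero already on $\mathbb{CP}^1$. The reasoning you give (``curvature commutes with $I$'' plus pair symmetry plus first Bianchi) is not enough to kill it. In the actual computation of $\{Q,\bar S\}$ the two-fermion piece $I^{MP}R_{MPQS}\psi^Q\bar\psi^S$ coming from $I^{MP}[\nabla_M,\nabla_P]$ is cancelled only against the two-fermion contractions extracted by normal-ordering the four-fermion term $\bar\psi^N\psi^M I_N^{\ P}R_{MPQS}\psi^Q\bar\psi^S$, and one must then separately check that the residual genuinely four-fermion piece vanishes, which requires the K\"ahler curvature identity $I_M^{\ M'}R_{M'PQS}=R_{MP'QS}I_P^{\ P'}$ in addition to Bianchi. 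The same subtlety recurs in matching $\{S,\bar S\}$ to $\{Q,\bar Q\}$. The paper's $F_\pm$ argument is designed precisely to bypass all of this bookkeeping.
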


\begin{proof}
{\it (i)} Nilpotency of $Q, S$ and the property $\{Q, S\} = 0$ follow from the proven above fact that 
$Q$ and $S$ are obtained by the same
similarity transformation
of the correponding flat supercharges and the validity of the ${\cal N} =4$ superalgebra for the latter.

{\it (ii)} To deal with the commutators like $\{Q, \bar S\}$, introduce the operators 
 \be
\lb{F+-}
 F_+ = \frac 12 I_{MN} \bar \psi^M  \bar \psi^N, \ \ \ \ F_- = \frac 12 I_{MN} \psi^M  \psi^N \ .
 \ee
Their commutator gives the  fermion charge operator, $F_0 = \psi_M \bar \psi^M$. The operators $\{F_-, F_0, F_+\}$ form
an $SU(2)$ triplet. 

{\it (iii)}
Consider now the commutator $[Q, F_+]$. Capitalizing on the scalar nature of $F_+$, one can upgrade the ordinary derivatives
in the combination $\partial_M  + \Omega_{M, AB} \psi_A \bar \psi_B $ to the covariant ones, $\partial_M \to 
{\cal D}_M$. The supercharge \p{QRham} acquires then the form $-i\psi^M \nabla_M$, where $\nabla_M$ is the
{\it full} covariant derivative involving also the fermion (spinor) part.

  Then one notes that
$$ \nabla_M \bar \psi^N \ =\ ({\cal D}_M e^N_A) \bar \psi_A - \Omega_{M, BA}e^N_B \, \bar \psi_A \ =\ 0$$
and uses the  condition \p{nablaI} that the manifold is K\"ahler to derive
 \be
 [Q, F_+] = \ 
i \bar \psi^Q I_Q^{\ M} \nabla_M = - \bar S \ , \ \ \ \ \ \  [\bar Q, F_-] \ 
 =\   -S \, , \nonumber
 \ee
\be
[S, F_+]  = \bar Q , \ \ \ \ \ \ \ \ \ \ \ \ \ \ \ \ \ [\bar S, F_-] =  Q 
 \ee
(the commutators $[Q, F_-]$ and $[\bar Q, F_+]$ vanish).
The vanishing of $\{Q, \bar S\} = \{Q, [F_+, Q]\}$ follows from nilpotency of $Q$ and the Jacobi identity.  

{\it (iv)} The anticommutator $\{S, \bar S\} = \{S, [F_+, Q]\}$ is reduced to the 
anticommutator $\{[S, F_+], Q\} = \{\bar Q, Q\}$ by the
Jacobi identity.
\end{proof}

\begin{thm}
 If the manifold is hyper-K\"ahler, the supercharges \p{QRham}, \p{Sarot} and their conjugates
satisfy the 
${\cal N} =8$ superalgebra with the only nonvanishing anticommutators
 \be
\lb{N8alg}
\{Q, \bar Q\} = \{S^1, \bar S^1\} =\  \{S^2, \bar S^2\} = \  
\{S^3, \bar S^3\}  \, .
 \ee
\end{thm}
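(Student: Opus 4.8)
The plan is to rerun the argument of Theorem 1, promoting the single complex structure $I$ to the triplet $I^1,I^2,I^3$ and using the quaternionic algebra \p{quatern} to control the new off-diagonal ($a\ne b$) pieces that have no counterpart in the K\"ahler case. First I would dispose of the "easy" part of the algebra. Since $Q$ and all three $S^a$ are obtained from the flat supercharges \p{QRham}, \p{Sarot} by one and the same similarity transformation, and $X\mapsto e^R X e^{-R}$ is an algebra automorphism, the flat ${\cal N}=8$ relations $Q^2=(S^a)^2=0$, $\{Q,S^a\}=0$ and $\{S^a,S^b\}=0$ (which follow from \p{S123}, \p{Qreal}, \p{quatern}) carry over verbatim, and Hermitian conjugation gives the barred analogues. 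Hence only the mixed anticommutators $\{Q,\bar Q\}$, $\{Q,\bar S^a\}$, $\{S^a,\bar S^b\}$ remain, and $\{Q,\bar Q\}=2H$ by the very definition of $\bar Q$.

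Next I would introduce, in analogy with \p{F+-}, the nine operators $F^a_+ = \frac 12 I^a_{MN}\bar\psi^M\bar\psi^N$, $F^a_- = \frac 12 I^a_{MN}\psi^M\psi^N$. Writing $Q=-i\psi^M\nabla_M$ and $S^a=-i\psi^M I_M^{a\ N}\nabla_N$ with $\nabla$ the full (bosonic plus spinorial) covariant derivative, and using $\nabla\bar\psi^M=0$ together with the covariant constancy \p{nablaI} of each $I^a$, the diagonal commutators $[Q,F^a_+]=-\bar S^a$, $[S^a,F^a_+]=\bar Q$, $[Q,F^a_-]=0$ come out exactly as in step (iii) of the previous proof. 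The genuinely new input is the off-diagonal commutator, which I expect to take the form
\be
\lb{SaFb}
[S^a,F^b_+] \ &=& \ \delta^{ab}\bar Q + \epsilon^{abc}\bar S^c \ ,
\ee
obtained by contracting $I_M^{a\ P} I_P^{b\ N}$ and feeding in the quaternionic algebra \p{quatern}; the conjugate relations follow by Hermitian conjugation.

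With \p{SaFb} in hand the algebra closes by pure bracket juggling. Using $\bar S^b=[F^b_+,Q]$ and the graded Jacobi identity with the bosonic operator $F^b_+$, together with $\{S^a,Q\}=0$, one gets $\{S^a,\bar S^b\}=\{[S^a,F^b_+],Q\}=\delta^{ab}\{\bar Q,Q\}+\epsilon^{abc}\{\bar S^c,Q\}$. The first term is precisely \p{N8alg}; the second vanishes because $\{\bar S^c,Q\}=\{[F^c_+,Q],Q\}=0$ is a Jacobi-plus-nilpotency consequence, and the same identity also yields $\{Q,\bar S^a\}=0$. This exhausts the list of anticommutators and establishes the ${\cal N}=8$ superalgebra.

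The main obstacle is entirely contained in \p{SaFb}: one must keep careful track of index raising and lowering and of the antisymmetry of the $I^a$ so that the product $I^aI^b$ collapses onto the span of $\{\delta,I^1,I^2,I^3\}$ with the coefficients dictated by \p{quatern}. In fact the coefficient of the $\bar S^c$ term is irrelevant for closure (those pieces anticommute with $Q$ anyway), so what really matters is that $I^aI^b$ stays inside that span at all, and that the diagonal piece reproduces exactly the $\bar Q$ already found in Theorem 1 rather than some multiple of it. The remaining worry — whether the spinorial part of $\nabla$ still annihilates $\bar\psi^M$ in the hyper-K\"ahler setting — is automatic, since that fact rests only on metric compatibility and covariant constancy of the complex structures, both of which hold here for all three $I^a$.
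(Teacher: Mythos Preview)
Your proof is correct and follows essentially the same route as the paper: the same similarity-transformation argument for the holomorphic anticommutators, the same operators $F^a_\pm$, and the same key commutator $[S^a,F^b_+]$ combined with Jacobi identities. The only cosmetic differences are the sign of the $\epsilon^{abc}\bar S^c$ term (the paper has a minus, and as you note the sign is immaterial) and the final Jacobi step --- the paper instead writes $\bar S^2=[S^1,F^3_+]$ and kills $\{S^1,\bar S^2\}$ directly via the nilpotency of $S^1$, relying on Theorem~1 for the diagonal $\{S^a,\bar S^a\}=\{Q,\bar Q\}$, whereas you obtain both diagonal and off-diagonal pieces in one stroke.
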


\begin{proof}
{\it (i)} Bearing in mind the results of the previous theorem, we have only to prove that 
$\{S^a, S^b\} = 0$ and $\{S^a, \bar S^b\} = 0$ when $a \neq b$. 
The first equality follows from the fact that all $S^a$ are obtained from the flat 
holomorphic supercharges in \p{S123} by one and the same similarity transformation.

{\it (ii)} To calculate  $\{S^a, \bar S^b\}$, introduce the operators
 \be
\lb{Fa+-}
F^a_+ \ =\  \frac 12 I^a_{MN} \bar \psi^M  \bar \psi^N \, , \ \ \ \ \ \ \ \ F^a_- \ =\  \frac 12 I^a_{MN} \psi^M   \psi^N \ .
  \ee
 The same reasoning as above and the quaternionic algebra \p{quatern} allow one to derive
      \be
\lb{FQabc}
[S^a, F^b_+] = \delta^{ab} \bar Q - \epsilon^{abc} \bar S^c, \ \ \ \ \ [\bar S^a, F^b_-] = \delta^{ab} Q - \epsilon^{abc} S^c 
 \ee
and $[S^a, F^b_-] = [\bar S^a, F^b_+] = 0$. Then e.g. $\{S^1, \bar S^2\} = \{S^1, [S^1, F_+^3]\} $, 
which vanishes due to nilpotency of $S_1$ and the Jacobi identity.
\end{proof}

\end{document}